\documentclass[a4paper,UKenglish]{lipics-v2021}

\pdfoutput=1
\usepackage{microtype}
\usepackage[utf8]{inputenc}
\usepackage{amsmath, amsthm, amssymb, mathtools, stmaryrd}
\usepackage{relsize}
\usepackage{xspace}
\usepackage{booktabs}
\usepackage{bbm}
\usepackage{enumerate}
\usepackage{lmodern}
\usepackage{todonotes}
\usepackage{hyperref}

\mathtoolsset{centercolon}

\SetSymbolFont{stmry}{bold}{U}{stmry}{m}{n}

%\DeclareMathOperator{\ker}{ker}

%calbffrak etc
\makeatletter
\newcommand{\@abbrev}[3]{
  \def\c@a@def##1{
      \if ##1.
        \relax
      \else
        \@ifdefinable{\@nameuse{#1##1}}{\@namedef{#1##1}{#2##1}}
        \expandafter\c@a@def
      \fi
    }
  \c@a@def #3.
}
\@abbrev{bb}{\mathbb}{ABCDEFGHIJKLMNOPQRSTUVWXYZ}
\@abbrev{b}{\bar}{abcdefghijklmnopqrstuvwxyz}
\@abbrev{bf}{\mathbf}{ABCDEFGHIJKLMNOPQRSTUVWXYZabcdefghijklmnopqrstuvwxyz}
\@abbrev{bit}{\boldsymbol}{
ABCDEFGHIJKLMNOPQRSTUVWXYZabcdefghijklmnopqrstuvwxyz}
\@abbrev{mc}{\mathcal}{ABCDEFGHIJKLMNOPQRSTUVWXYZ}
\@abbrev{mb}{\mathbb}{ABCDEFGHIJKLMNOPQRSTUVWXYZ}
\@abbrev{mf}{\mathfrak}{
ABCDEFGHIJKLMNOPQRSTUVWXYZabcdefghijklmnopqrstuvwxyz}
\@abbrev{rm}{\mathrm}{ABCDEFGHIJKLMNOPQRSTUVWXYZabcdefghijklmnopqrstuvwxyz}
\@abbrev{scr}{\mathscr}{
ABCDEFGHIJKLMNOPQRSTUVWXYZabcdefghijklmnopqrstuvwxyz}
\@abbrev{sf}{\mathsf}{ABCDEFGHIJKLMNOPQRSTUVWXYZabcdefghijklmnopqrstuvwxyz}
\makeatother

\newcommand{\N}{\mbN}
\newcommand{\Z}{\mbZ}

\newcommand{\field}{\mathbb}

\renewcommand{\phi}{\varphi}
\renewcommand{\theta}{\vartheta}
\newcommand{\tup}{\bar}

%%%%%%%%%%%%%% Commands

\newcommand{\CFIgraph}[3]{\text{\sf CFI}\,[#1, #2, #3]}
\newcommand{\CFIgraphO}[3]{\text{\sf CFI}_{\text{\sf O}}\,[#1, #2, #3]}
\newcommand{\CFIgraphI}[3]{\text{\sf CFI}_{\text{\sf I}}\,[#1, #2, #3]}

\newcommand{\Primes}{\mbP}

%% Interpretationen

%% Blocked edges

%% Cyclic structures

%% Typ

%% Logics
\newcommand{\FPC}{\textup{FPC}\xspace}

%%% 
%%% Vocabularies, classes for vectors, matrices, les
%%%

%%% Cocyclic

%% counting equivalence
\newcommand{\cequivx}[1]{\ensuremath{\equiv^{#1}}}
\newcommand{\cequivk}{\cequivx k}

%%% Matrix similarity games

\newcommand{\simequivx}[2]{\ensuremath{\equiv^\text{IM}_{#1, #2}}}

\newcommand{\IMequiv}{\simequivx}

%% Simultaneous Matrix Similarity Problem

%% Identity Matrix

%%% Algebras

%% Coloured index pair

%% n-1

%% n-1

%% s-1

%% k-1

%% l-1

%% inseg

%% Diagonalisation

%% faithfully block generated

%% locally simsim

%% arity

%% Counting Type

%% CL-Algebra

%% Linear Algebra isomorphic

%% Stabiliser Subgroup

%% LC^k-Basis Matrices

%% Prime Field

%% Invertible Map Pebble Game

\newcommand{\FPR}{\mathrm{FPR}}
\newcommand{\LALogic}{\ensuremath{\mathrm{LA}^{\omega}}}
\newcommand{\LAkLogic}{\ensuremath{\mathrm{LA}^{k}}}
\newcommand{\iso}{\ensuremath{\cong}}

\newcommand{\primes}{\mathbb{P}}

\newcommand{\nats}{\mathbb{N}}
\newcommand{\ra}{\rightarrow}

\newcommand{\set}[1]{\lbrace #1 \rbrace}
\newcommand{\setcond}[2]{\lbrace #1 : #2 \rbrace}
\newcommand{\StructO}{A}
\newcommand{\StructI}{B}
\newcommand{\neighbors}[2]{N_{#1}(#2)}
\newcommand{\rel}{R}

\title{Limitations of the Invertible-Map Equivalences}
 
\author{Anuj Dawar}{University of Cambridge, UK}{anuj.dawar@cl.cam.ac.uk}{https://orcid.org/0000-0003-4014-8248}{}
\author{Erich Grädel}{RWTH Aachen University, Germany}{graedel@logic.rwth-aachen.de}{https://orcid.org/0000-0002-8950-9991}{}
\author{Moritz Lichter}{TU Darmstadt, Germany}{lichter@mathematik.tu-darmstadt.de}{https://orcid.org/0000-0001-5437-8074}{The research leading to these results has received funding from the European Research Council (ERC) under the European Union’s Horizon 2020 research and innovation programme (EngageS: grant agreement No.\ 820148).}

\authorrunning{A. Dawar, E. Gr\"adel, and M.~Lichter} 
 
\Copyright{Anuj Dawar, Erich Grädel, and Moritz Lichter} 

\begin{CCSXML}
	<ccs2012>
	<concept>
	<concept_id>10003752.10003790.10003799</concept_id>
	<concept_desc>Theory of computation~Finite Model Theory</concept_desc>
	<concept_significance>500</concept_significance>
	</concept>
	</ccs2012>
\end{CCSXML}
\ccsdesc[500]{Theory of computation~Finite Model Theory}
\keywords{Finite Model Theory, Graph Isomorphism, Descriptive
  Complexity, Algebra} 
\nolinenumbers %uncomment to disable line numbering
\hideLIPIcs  %uncomment to remove references to LIPIcs series (logo, DOI, ...), e.g. when preparing a pre-final version to be uploaded to arXiv or another public repository

\begin{document} 
\maketitle

\begin{abstract}
This note draws conclusions that arise by combining two recent papers,
by Anuj Dawar, Erich Grädel, and Wied Pakusa, published at ICALP 2019 and by Moritz Lichter, published at LICS 2021. 
In both papers, the main technical results rely on the combinatorial and algebraic analysis of the invertible-map 
equivalences $\IMequiv{k}{Q}$ on certain variants of Cai-Fürer-Immerman structures (CFI-structures for short).
These $\IMequiv{k}{Q}$-equivalences, for a natural number $k$ and a set of  primes~$Q$, 
refine the well-known Weisfeiler-Leman  equivalences used in algorithms for graph isomorphism.
The intuition is that two graphs $G\IMequiv{k}{Q}H$ cannot be distinguished by  iterative refinements of 
equivalences on $k$-tuples defined via linear operators on vector spaces over fields of characteristic $p \in Q$.

In the first paper  it has been shown, using considerable algebraic machinery, that for a prime $q \notin Q$, 
the  $\IMequiv{k}{Q}$ equivalences are not strong enough to distinguish between non-isomorphic
CFI-structures over the field $\field F_q$. In the second paper, a similar but not identical construction for 
CFI-structures over the rings $\mbZ_{2^i}$ has, again by rather involved combinatorial and algebraic arguments,
been shown to be indistinguishable with respect to $\IMequiv{k}{\{2\}}$. Together with earlier work on rank logic,
this second result suffices to separate rank logic from polynomial time.

We show here that the two approaches can be unified to prove that  CFI-structures
over the rings $\mbZ_{2^i}$ are in fact  indistinguishable with respect to
$\IMequiv{k}{\primes}$, for the set $\primes$ of \emph{all} primes. In particular, this implies the following two results.
\begin{itemize}
\item There is no fixed $k$ such that the invertible-map equivalence $\IMequiv{k}{\primes}$ coincides with isomorphism on all finite graphs.
\item No extension of fixed-point logic by linear-algebraic operators over fields can capture polynomial time.
\end{itemize}
\end{abstract}

%\tableofcontents

\section{Invertible-map equivalences and linear algebraic logics}

Invertible-map equivalences are refinements of the Weisfeiler-Leman method, an important technique in the study of the 
graph isomorphism problem.  For each positive integer $k$, the $k$-dimensional Weisfeiler-Leman method ($k$-WL method for short) defines an equivalence relation $\cequivk$ which over-approximates isomorphism in the sense that if $G\iso H$ for a pair of graphs $G$ and $H$, then $G \cequivk H$ for any $k$. 
These equivalence relations get finer with increasing~$k$ and approach isomorphism in the limit. Indeed, if $G$ and $H$ are $n$-vertex graphs then $G \cequivx{n} H$ if, and only if, $G \iso H$ and, for each fixed $k$, the equivalence relation $\cequivk$ is decidable in time $n^{O(k)}$.  
Thus, if there was a fixed $k$ such that $\cequivk$ was the same as isomorphism, we would have a polynomial-time algorithm for graph isomorphism.  
However, there is no such fixed $k$.  Cai, F\"urer, and Immerman~\cite{CFI92} showed that there are pairs of non-isomorphic  graphs $G$ and $H$ with $O(k)$ vertices such that $G \cequivk H$.  We call the construction of such graphs the CFI-construction.
The Weisfeiler-Leman equivalences are also of central importance in descriptive complexity theory since they delimit the power of logics with counting operators, such as
fixed-point logic with counting ($\FPC$), which is  a fundamental formalism in the quest for a logic for PTIME (see~\cite{Grohe08}).

The CFI-construction, in its original form, can be seen as a graph encoding of linear equation systems  over 
the field $\field{F}_2$ \cite{AtseriasBulDaw09}. Thus, while $\FPC$ is not strong enough to tell apart 
non-isomorphic CFI-structures, this can be done by stronger extensions of fixed-point logics that are powerful enough to solve such equation systems. A number of such extensions have been studied in \cite{DawarGraHolKopPak13};
the most influential one is \emph{rank logic} (FPR), proposed in \cite{DawarGroHolLau09}.
Rank logic extends fixed-point logic by operators for the rank of definable matrices over a given finite field $\field F_p$. For a somewhat more
powerful variant of rank logic $\FPR^*$, studied in~\cite{GraedelPak19}, it has until recently been open
whether it defines all polynomial-time properties  of finite structures.

The invertible-map equivalences have been defined in~\cite{DawarHol17} as a tool to study the 
expressive power of rank logic.  Like the $k$-WL equivalences, they are defined by iterated refinements
of equivalences between $k$-tuples. However,  the refinement process is not defined on the basis of counting, but 
on the basis of invertible maps between matrices obtained from the given tuples by appropriate substitutions.
For a formal definition, we refer to  \cite[Sect. 3.1]{DawarGraPak19a}.
The equivalences  $\IMequiv{k}{\{2\}}$ properly refine the Weisfeiler-Leman equivalences in 
the sense that $G \IMequiv{k'}{\{2\}} H$ for sufficiently large $k'$ implies $G \cequivk H$ for all graphs $G$ and $H$,
but for the pairs $G, H$ obtained in the CFI-construction, $G \not\IMequiv{3}{\{2\}} H$. 
As shown in \cite{DawarHol17} there is, for every formula~$\phi$ of rank logic FPR, 
a $k\in\N$ and a finite set $Q$ of primes such that the class of models of $\phi$ is closed under $\IMequiv{k}{Q}$.
But in fact, the invertible-map equivalences are potentially much finer
than the equivalences under rank logic.
They delimit the expressive power not just of rank logic, but of arbitrary extensions of fixed-point logic by
linear-algebraic operators. Intuitively, a linear-algebraic operator over a field $\field F$
is any function $f$  that maps tuples $(M_1, \dots, M_m)$ of 
$\field F$-linear transformations on (subspaces of) 
an abstract vector space $\field F^\mcB$ to some kind of linear-algebraic information 
$f(M_1, \dots, M_m) \in \N$.
We do not even require that the function $f$ is computable, but to define
``linear-algebraic information'' it has to be invariant under $\field F$-vector space isomorphisms. 
This means that $f(M_1, \dots, M_m)=f(N_1, \dots, N_m)$
for any two sequences $(M_1, \dots, M_m)$ and $(N_1, \dots, N_m)$
that are simultaneously similar, in the sense that there is a $\field F$-vector space isomorphism
$S$ such that $N_i \cdot S = S \cdot M_i$ for all $i\leq m$.
The general linear-algebraic logics  $\LAkLogic(Q)$, defined in \cite{DawarGraPak19}, are
infinitary $k$-variable logics with generalized quantifiers
for all linear-algebraic operators over finite vector spaces of characteristic $p \in Q$.
For a detailed definition that is not needed here we refer to \cite[Sect.~3.2.]{DawarGraPak19a}.

Notice that the logics  $\LAkLogic(Q)$ and $\LALogic(Q) = \bigcup_{k \in \omega} \LAkLogic(Q)$
are non-effective, infinitary logics that are not intended for practical use.
Their relevance stems from the fact that they encompass any extension of 
first-order logic or  fixed-point logics by means of $Q$-linear-algebraic operators.
Thus, inexpressibility results for  $\LAkLogic(Q)$ and $\LALogic(Q)$ directly translate to
inexpressibilty results for all such logics, in particular for  rank logic or logics with solvability operators for
linear equation systems.

It has been shown in \cite{DawarGraPak19} that $\LAkLogic(Q)$ is the logic for
which the invertible-map equivalence~$\IMequiv{k}{Q}$ is the natural notion of elementary equivalence.  
\begin{theorem}
	\label{TheoremLKequivIM}
   Let $k \geq 2$ be a positive integer and $Q$ a set of prime
   numbers.  For any finite structure $\mfA$ and $\tup{a},\tup{b} \in
   A^k$, the following are equivalent:
   \begin{enumerate}
  \item  $(\mfA, \tup a)  \IMequiv{k}{Q}  (\mfA, \tup b)$; and
  \item  for every formula
  $\phi$ of $\LAkLogic(Q)$, $\mfA \models \phi[\tup a]$ if, and only if, $\mfA \models \phi[\tup b]$.
   \end{enumerate}
\end{theorem}

\section{Invertible-map equivalences for generalised CFI-structures}

We next present a high-level exposition of the results in \cite{DawarGraPak19} and \cite{Lichter21}
on invertible-map equivalences of CFI-structures, and their consequences for graph isomorphism and 
descriptive complexity. We refer to the full versions of these papers, published on ArXiv \cite{DawarGraPak19a, Lichter21a}.

It is well-known that the  CFI-construction can be adapted beyond the field $\field F_2$
to many other algebraic structures. A general variant
due to Holm \cite{Holm10} is based on arbitrary finite Abelian groups.
In \cite{GraedelPak19} a variant over prime fields 
$\field F_p$ has been used to show that formulae of $\FPR$
that do not use a rank operator over the field $\field F_p$ 
are no more expressive than formulae of $\FPC$ over these graphs.
This separates the expressive power of $\FPR$ from that of $\FPR^*$, and proves that $\FPR$ does not capture
PTIME.  
In \cite{DawarGraPak19a}, the same graph construction
has been analysed with significantly deeper algebraic machinery,
connecting it to invertible-map equivalences for primes $p \notin Q$.

More precisely, this variant of the CFI-construction associates with 
every connected, $3$-regular, ordered, and simple \emph{base graph} 
$G=(V,E,\leq)$, every prime field $\field F_p$, and every function
$\lambda: V\ra\field F_p$
a CFI-structure $\CFIgraph{G}{\field F_p}{\lambda}$, with the following properties:
\begin{itemize}
\item The automorphism group of  $\CFIgraph{G}{\field F_p}{\lambda}$ is an elementary Abelian $p$-group.
\item Two CFI-structures $\CFIgraph{G}{\field F_p}{\lambda}$ and $\CFIgraph{G}{\field F_p}{\sigma}$ 
over the same base graph $G$ are isomorphic if, and only if, 
$\sum \lambda = \sum_{v \in V} \lambda (v) = \sum_{v \in V} \sigma(v) = \sum \sigma$.
\end{itemize}

The \emph{CFI-problem} (over a class $\mcF$ of base graphs 
and a field $\field F_p$) is to 
decide, given a structure 
$\CFIgraph{G}{\field F_p}{\lambda}$ with $G\in\mcF$, whether $\sum\lambda = 
0$. The CFI-problem is solvable in polynomial time, for instance by Gaussian elimination.

\medskip
For proving logical inexpressibility results, the full power of the CFI-construction is unfolded 
when the graphs in the underlying class $\mcF$
are highly connected. The class used in \cite{DawarGraPak19a} is a family 
$\mcF = \{ G_n : n \in \mathbb N\}$ of 3-regular, connected expander graphs 
where $G_n$ has $\mcO(n)$ vertices. By the Cai-Fürer-Immerman Theorem \cite{CFI92}
and its well-known generalisations to other algebraic structures than $\field F_2$, we
have the following property:
\begin{itemize}
\item For every $G_n\in\mcF$ and all $\lambda,\sigma:V\ra\field F_p$ we have that 
$\CFIgraph{G_n}{\field F_p}{\lambda} \equiv^{\Omega(n)} 
\CFIgraph{G_n}{\field F_p}{\sigma}$.
\end{itemize}

A final important fact about these CFI-structures is a \emph{homogeneity} property: 
Despite the fact that counting logic cannot determine the 
full isomorphism type of a CFI-structure, it can, with $\mcO(k)$ many variables,
distinguish between those pairs of $k$-tuples which are not related via an automorphism of the CFI-structure.

\begin{itemize}
\item For all $k$-tuples $\tup a,\tup b$ in a CFI-structure $\mfA = \CFIgraph{G}{\field F_p}{\lambda}$ with $G\in\mcF$, 
we have that $(\mfA, \tup a) \equiv^{3k} (\mfA, \tup b)$ if, and only if,  $f(\tup a)=\tup b$ for some automorphism $f$
of $\mfA$.
\end{itemize}
 
Based on these properties, and on methods from the representation theory of finite groups,
such as Maschke's Theorem, the main technical result of \cite{DawarGraPak19a} says the following:
on CFI-structures for~$\mcF$ and the field $\field F_p$
the distinguishing power of $\IMequiv{k}{Q}$, where $p \not\in Q$, is no greater than
the counting equivalence $\cequivx{\ell}$ for some fixed $\ell$.

\begin{theorem} \label{DGPtheorem}
  Let $p\not\in Q$.  For every $k$
	there is an $n$ such that for every $G_m \in \mcF$ 
	satisfying $m\geq n$
	and all $\lambda,\sigma$ we have that 
$\CFIgraph{G_m}{\field F_p}{\lambda}  \IMequiv{k}{Q} \CFIgraph{G_m}{\field F_p}{\sigma}$.
\end{theorem}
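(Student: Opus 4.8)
The plan is to deduce the statement from a \emph{collapse} result: for every $k$ there is an $\ell=\ell(k)$, depending only on $k$, such that for all CFI-structures $\mfA,\mfB$ built over the family $\mcF$ and the field $\field F_p$, the implication $\mfA \cequivx{\ell} \mfB \Rightarrow \mfA \IMequiv{k}{Q} \mfB$ holds. Granting this, the theorem follows at once. Fix $k$ and set $\ell=\ell(k)$. Since $\CFIgraph{G_m}{\field F_p}{\lambda} \cequivx{\Omega(m)} \CFIgraph{G_m}{\field F_p}{\sigma}$ by the Cai--F\"urer--Immerman property, we may choose $n$ so large that $\Omega(m)\geq\ell$ for all $m\geq n$; then $\cequivx{\Omega(m)}$ refines $\cequivx{\ell}$, so the two structures are $\cequivx{\ell}$-equivalent and hence, by the collapse, $\IMequiv{k}{Q}$-equivalent.

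The real work is the collapse, which I would argue through the invertible-map game characterising $\IMequiv{k}{Q}$. In this game Spoiler presents, over fields $\field F_q$ (and finite extensions thereof) with $q\in Q$, families of linear operators obtained by substituting the current $k$-tuples into the relations of the structure, and Duplicator must answer with a simultaneous similarity. The decisive structural observation is that the ambient vector spaces carry a natural $\field F_q$-linear action of the automorphism group $\Gamma=\Aut(\mfA)$, and the operators Spoiler may form are equivariant with respect to it; the matrix families are therefore morphisms of $\field F_q[\Gamma]$-modules. Now $\Gamma$ is an elementary Abelian $p$-group, so $|\Gamma|$ is a power of $p$, while $q\neq p$ because $q\in Q$ and $p\notin Q$; hence $q\nmid|\Gamma|$ and, by Maschke's Theorem, the group algebra $\field F_q[\Gamma]$ is semisimple. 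Consequently every module splits as a direct sum of irreducibles, and two module structures are isomorphic---equivalently, the associated matrix families are simultaneously similar---exactly when they share the same multiplicity of each irreducible constituent. Because $\Gamma$ is Abelian, these irreducibles become one-dimensional characters over a splitting field $\field F_{q^r}$, and the multiplicities are the dimensions of the corresponding isotypic (character-eigenspace) components.

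It remains to show that $\cequivx{\ell}$ already pins down these isotypic multiplicities, and here I would invoke the homogeneity property: $(\mfA,\tup a)\cequivx{3k}(\mfA,\tup b)$ holds if and only if some automorphism maps $\tup a$ to $\tup b$, so with $\mcO(k)$ variables counting logic sees exactly the orbit structure of tuples under $\Gamma$. The isotypic dimensions are counts of definable subspaces cut out by this orbit data, so equal counting colours force equal multiplicities across the $\lambda$- and $\sigma$-structures; Maschke then supplies, round after round, the simultaneous similarity that Duplicator needs. The main obstacle is precisely this last bridge between the two characteristics $p$ and $q$. One must verify that the character-eigenspace decomposition over $\field F_q$ is uniform enough across orbits to be read off by counting, that the splitting field $\field F_{q^r}$ (with $r$ the multiplicative order of $q$ modulo $p$) is controlled, and---most delicately---that the number of variables $\ell$ needed depends only on $k$, and not on $m$ or on the prime $q\in Q$ chosen by Spoiler. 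Making this dependence uniform, so that a single $\ell=\ell(k)$ works for the whole expander family and for every challenge, is the crux of the argument and the place where the heavier representation-theoretic bookkeeping is required.
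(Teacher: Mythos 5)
Your proposal follows the same route as the paper: the theorem is reduced to a collapse of $\IMequiv{k}{Q}$ to a counting equivalence $\cequivx{\ell}$ of fixed dimension on these CFI-structures, proved via the fact that the automorphism group is an elementary Abelian $p$-group, so that Maschke's Theorem makes $\field F_q[\Gamma]$ semisimple for $q \in Q$ and simultaneous similarity reduces to matching isotypic multiplicities, which homogeneity makes visible to counting logic; the conclusion then follows from the $\cequivx{\Omega(m)}$-equivalence of the CFI-structures. This matches the argument of \cite{DawarGraPak19a} that the paper invokes, with the acknowledged uniformity of $\ell(k)$ being exactly where the cited paper's heavier algebraic machinery is spent.
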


\begin{corollary}
  If $Q\neq \Primes$, there is no fixed $k$ such that $\IMequiv{k}{Q}$
  coincides with isomorphism on all finite structures.
\end{corollary}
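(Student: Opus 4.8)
The plan is to derive the corollary directly from Theorem~\ref{DGPtheorem} by a contradiction argument. Since $Q \neq \Primes$ and $Q$ is a set of primes, we have $Q \subsetneq \Primes$, so I would first fix a prime $p \notin Q$. Suppose, towards a contradiction, that there were a fixed $k$ for which $\IMequiv{k}{Q}$ coincided with isomorphism on all finite structures. Applying Theorem~\ref{DGPtheorem} to this $p$ and $k$ yields an $n$, and I would then pick any base graph $G_m \in \mcF$ with $m \geq n$. The theorem guarantees that $\CFIgraph{G_m}{\field F_p}{\lambda} \IMequiv{k}{Q} \CFIgraph{G_m}{\field F_p}{\sigma}$ for \emph{all} choices of parameter functions $\lambda, \sigma : V \ra \field F_p$.

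The second step is to exhibit a witnessing pair of parameter functions that makes the two structures non-isomorphic. Since $p$ is a prime, $\field F_p$ has at least two elements, and since $G_m$ is connected and $3$-regular it has at least one vertex. Hence I can take $\lambda \equiv 0$, so that $\sum \lambda = 0$, and $\sigma$ mapping a single vertex to $1$ and all others to $0$, so that $\sum \sigma = 1 \neq 0$ in $\field F_p$. By the isomorphism criterion for CFI-structures recalled before Theorem~\ref{DGPtheorem}, namely that $\CFIgraph{G}{\field F_p}{\lambda}$ and $\CFIgraph{G}{\field F_p}{\sigma}$ are isomorphic if, and only if, $\sum \lambda = \sum \sigma$, these two structures are \emph{not} isomorphic.

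Combining the two steps yields the contradiction: $\IMequiv{k}{Q}$ identifies the two CFI-structures, yet they are non-isomorphic, so $\IMequiv{k}{Q}$ cannot coincide with isomorphism after all. As the argument applies to every $k$, no fixed $k$ can suffice. There is essentially no hard step here: the entire mathematical content sits in Theorem~\ref{DGPtheorem}, and the only facts to verify are the trivial observations that a prime outside $Q$ exists whenever $Q \neq \Primes$ and that the base graphs are nonempty, so that two distinct fibre-sums over $\field F_p$ are realizable.
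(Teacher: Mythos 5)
Your proposal is correct and is exactly the argument the paper leaves implicit: pick a prime $p\notin Q$, apply Theorem~\ref{DGPtheorem} to obtain $\IMequiv{k}{Q}$-equivalent CFI-structures, and use the stated isomorphism criterion ($\sum\lambda=\sum\sigma$) to choose a non-isomorphic pair. Nothing further is needed.
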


The interesting question left open by this result is, of course, 
the case when $Q = \Primes$.
Since the CFI-problem, for arbitrary base graphs, is solvable
in polynomial time by solving systems of
linear equations, we get the following limitations for the expressive power of the
logics $\LALogic(Q)$. 

\begin{corollary}\label{cor:undefinable}
   If $Q\neq \Primes$, there is a class of finite structures  that is decidable in polynomial time,
   but not definable in $\LALogic(Q)$. 
\end{corollary}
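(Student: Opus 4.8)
The plan is to exhibit the CFI-problem over $\mcF$ and a suitable field as the required separating class. Since $Q \neq \Primes$, I may fix a prime $p \notin Q$, which is precisely the hypothesis under which Theorem~\ref{DGPtheorem} applies to the structures $\CFIgraph{G_m}{\field F_p}{\lambda}$. Define the class $\mcC$ to consist of all CFI-structures $\CFIgraph{G}{\field F_p}{\lambda}$ with $G \in \mcF$ and $\sum \lambda = 0$. By the stated solvability of the CFI-problem via Gaussian elimination, membership in $\mcC$ is decidable in polynomial time: given an input structure one recognises whether it is a CFI-structure over some $G \in \mcF$, reads off $\lambda$, and tests $\sum \lambda = 0$ by solving the associated linear system over $\field F_p$.

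It remains to show that $\mcC$ is not definable in $\LALogic(Q)$. Suppose, for contradiction, that some sentence $\phi$ defines $\mcC$. Since $\LALogic(Q) = \bigcup_{k} \LAkLogic(Q)$, we have $\phi \in \LAkLogic(Q)$ for some $k$, which we may take to be at least $2$. Now invoke Theorem~\ref{DGPtheorem} with this $k$ and our fixed $p \notin Q$: there is an $n$ such that for every $G_m \in \mcF$ with $m \geq n$ and all $\lambda, \sigma$ we have $\CFIgraph{G_m}{\field F_p}{\lambda} \IMequiv{k}{Q} \CFIgraph{G_m}{\field F_p}{\sigma}$. Fix one such $m \geq n$ and choose $\lambda$ with $\sum \lambda = 0$ and $\sigma$ with $\sum \sigma \neq 0$, which is possible because $\field F_p$ has a nonzero element. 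Then $\CFIgraph{G_m}{\field F_p}{\lambda} \in \mcC$ while $\CFIgraph{G_m}{\field F_p}{\sigma} \notin \mcC$, so $\phi$ holds in the former and fails in the latter. On the other hand, the two structures are $\IMequiv{k}{Q}$-equivalent, so by the sentence-level form of Theorem~\ref{TheoremLKequivIM} they agree on every sentence of $\LAkLogic(Q)$, in particular on $\phi$, a contradiction.

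The genuinely hard part of this result is entirely contained in Theorem~\ref{DGPtheorem}, which supplies the indistinguishability of CFI-structures carrying different charges; the corollary merely packages it. The only points requiring care are the passage from the pointed-structure statement of Theorem~\ref{TheoremLKequivIM} to equivalence of \emph{sentences} across two distinct structures, which is the standard sentence-level reading of invertible-map equivalence, and the verification that recognising CFI-structures over $\mcF$ together with the Gaussian-elimination test indeed runs in polynomial time. Neither presents a real difficulty, so the argument is a direct consequence of the machinery already in place.
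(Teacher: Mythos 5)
Your proposal is correct and follows exactly the argument the paper intends (the paper leaves the proof implicit, noting only that the CFI-problem is polynomial-time solvable and letting Theorem~\ref{DGPtheorem} together with Theorem~\ref{TheoremLKequivIM} do the rest). Your explicit handling of the passage from the pointed, single-structure form of Theorem~\ref{TheoremLKequivIM} to agreement on sentences across two structures is a reasonable point of care, and it matches how the paper itself uses that theorem elsewhere (e.g.\ in the proof of Lemma~\ref{LemmaCombineCharacteristics}).
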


Since $\LALogic(Q)$ subsumes $\FPC$, no extension of fixed-point
logic by $Q$-linear algebraic operators can capture PTIME,
unless it includes such operators for all prime characteristics.

\medskip
More recently,  a somewhat different CFI-construction over the rings $\Z_{2^i}$ has been used 
by Lichter \cite{Lichter21a} to separate rank logic from PTIME. 
His construction of CFI-structures $\CFIgraph{G}{\Z_{2^i}}{\lambda}$ is not based on 3-regular
graphs, but on highly connected regular graphs of large degree and girth. Further, but this is a
minor point, the last component is not a function on vertices, but
a function $\lambda: E\ra \Z_{2^i}$ defining the values by which edges are twisted.
Analogous properties as above apply. In particular, 
\begin{itemize}
\item The automorphism group of $\CFIgraph{G}{\Z_{2^i}}{\lambda}$ is an Abelian $2$-group.
\item Two CFI-structures $\CFIgraph{G}{\Z_{2^i}}{\lambda}$ and $\CFIgraph{G}{\Z_{2^i}}{\sigma}$ 
are isomorphic if, and only if, 
$\sum \lambda = \sum_{e \in E} \lambda (e) = \sum_{e \in E} \sigma(e) = \sum \sigma$.
\end{itemize}

The analysis of these CFI-structures is done in terms of the game-theoretic description of
the invertible-map equivalences, the so-called invertible-map game introduced in \cite{DawarHol17},
using combinatorial objects called blurers. 
The main technical result of \cite{Lichter21a} shows that these CFI-structures
cannot be told apart by invertible-map equivalences for the prime 2.

\begin{theorem} \label{LichterTheorem}
For each $k$ there exists a graph $G=(V,E,\leq)$, a number $i$, and two functions
$\lambda,\sigma: E\ra \Z_{2^i}$ such that $\sum\sigma=\sum \lambda + 2^{i-1}$ and 
$\CFIgraph{G}{\Z_{2^i}}{\lambda}\IMequiv{k}{\{2\}}\CFIgraph{G}{\Z_{2^i}}{\sigma}$.
\end{theorem}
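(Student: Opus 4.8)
The plan is to choose, for each $k$, a suitable base graph $G$ and modulus $i$, and then to beat the Spoiler in the $k$-pebble invertible-map game for characteristic $2$ on the pair $\mfA = \CFIgraph{G}{\Z_{2^i}}{\lambda}$ and $\mfB = \CFIgraph{G}{\Z_{2^i}}{\sigma}$ with $\sum\sigma = \sum\lambda + 2^{i-1}$. By the isomorphism criterion these two structures are non-isomorphic, so a winning Duplicator strategy yields exactly the desired $\IMequiv{k}{\{2\}}$-equivalence between non-isomorphic structures. I would take $G$ regular of large degree $d$, of large girth, and highly connected, all three parameters growing with $k$, and let $i$ grow with $k$ as well; the need for large degree, rather than the $3$-regular graphs of the $\field F_p$ case, is precisely what makes room for the construction below, and the large $i$ pushes the twist deep into the $2$-adic filtration where bounded characteristic-$2$ linear algebra cannot reach it.

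First I would recall the invertible-map game of \cite{DawarHol17} that characterizes $\IMequiv{k}{\{2\}}$. From a position consisting of two pebbled $k$-tuples one forms, for each $m \leq k$, matrices indexed by $m$-tuples that record the atomic type of the induced substructure. In a round the Spoiler selects $m$ and a linear operator over a finite field of characteristic $2$ assembled from these matrices, and the Duplicator must respond with an invertible linear map over that field that simultaneously conjugates the $\mfA$-matrices to the $\mfB$-matrices; pebbles are then repositioned consistently with this map. Thus the whole problem reduces to producing, in every round, an invertible characteristic-$2$ map compatible with the current pebbling.

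Two classical CFI ingredients drive the strategy. The large girth makes $\mfA$ and $\mfB$ isomorphic on every ball of radius below the girth, so that the twist is locally invisible, and the high connectivity lets the global twist by $2^{i-1}$ be routed through the cycle space of $G$ away from any bounded set of pebbled edges. These are the ingredients behind the analogue of the CFI-equivalence property stated above, which already gives $\mfA \cequivx{\Omega(n)} \mfB$ by way of the bijective pebble game. The genuinely new task is to lift such a counting/bijective strategy to an invertible-map strategy: the available local isomorphisms must be assembled into a single invertible linear map over a field of characteristic $2$ that answers each Spoiler challenge.

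This lifting is the role of the combinatorial device called a \emph{blurer}, and constructing it is where the hard work lies. The guiding observation is that the twist by $2^{i-1}$ sits entirely in the top layer $2^{i-1}\Z_{2^i}$ of the $2$-adic filtration, while every linear operator the Spoiler can build from the CFI-matrices is governed by the reduction of the $\Z_{2^i}$-structure modulo $2$. A blurer would package, for each challenge, a family of admissible local isomorphisms whose $\field F_2$-linear combination both realizes the required conjugation and annihilates the contribution of the top-layer twist; the large degree of $G$ is what supplies enough independent local moves for such a combination to exist and to be invertible. The main lemma, and the principal obstacle, is exactly the existence of blurers for base graphs of sufficiently large degree relative to $k$: one must verify that the perturbation caused by the $2^{i-1}$-twist always lies in the span of the Duplicator's admissible moves, and that the assembled map is genuinely invertible rather than merely a homomorphism. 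Granting this, the homogeneity property guarantees that the pebble positions contribute nothing beyond the linear-algebraic challenge already met by the blurer, so the strategy is self-sustaining across rounds; the Duplicator wins every round, and $\mfA \IMequiv{k}{\{2\}} \mfB$ follows.
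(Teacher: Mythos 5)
Your outline follows the same route as the actual proof of this theorem, which this paper does not reprove but imports from \cite{Lichter21a}: play the invertible-map game of \cite{DawarHol17} on a pair of non-isomorphic structures $\CFIgraph{G}{\Z_{2^i}}{\lambda}$, $\CFIgraph{G}{\Z_{2^i}}{\sigma}$ over a base graph of large degree, girth, and connectivity, with $i$ growing in $k$, and use blurers to turn local isomorphisms into the invertible characteristic-$2$ maps the Duplicator must produce. So the architecture is right. The problem is that the entire technical content of the theorem is concentrated in the one step you explicitly defer: the existence of blurers and the verification that the map assembled from them is simultaneously (i) invertible, (ii) a conjugation carrying each $\mfA$-matrix to the corresponding $\mfB$-matrix, and (iii) consistent with the pebbled tuples so that the strategy survives the repositioning of pebbles into the next round. ``Granting this, \dots the strategy is self-sustaining'' is precisely the statement of Lichter's main lemma, not a proof of it; without a construction of the blurer and an analysis of its orbit structure under the automorphism group, nothing has been established.

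Two of your supporting claims are also imprecise in ways that matter. First, the matrices in the invertible-map game record atomic types of $m$-tuples and the Spoiler may work over any finite field of characteristic $2$ (not just $\field F_2$), so the assertion that every operator the Spoiler can build ``is governed by the reduction of the $\Z_{2^i}$-structure modulo $2$'' is not a fact you can simply invoke; making the interaction between the $2^{i-1}$-twist and characteristic-$2$ similarity precise is exactly what the blurer machinery is for, and it is why $i$ must grow with $k$ rather than being fixed. Second, homogeneity of these structures is itself nontrivial here: the base graphs are $d$-regular for large $d$ and only $c$-connected (not $3$-regular expanders), so orbit-definability in bounded-variable counting logic holds only for tuples of bounded length and has to be argued, as Section~\ref{sec:combine-constructions} of the paper discusses when adapting the argument. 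As it stands your text is a faithful roadmap of the proof in \cite{Lichter21a}, but the lemma you name as ``the principal obstacle'' is the theorem.
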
 

Further, Lichter refines an argument from \cite{GraedelPak19} to
show that on the CFI-structures over~$\Z_{2^i}$, 
every formula of $\FPR^*$ is equivalent to an $\FPR$ formula
with rank operators only over the field $\field F_2$.
But these cannot tell apart $\IMequiv{k}{\{2\}}$-equivalent structures.
Thus, there exists a variant of the CFI-problem that is not definable in rank logic.

\begin{corollary}
$\FPR^*$ does not capture PTIME.
\end{corollary}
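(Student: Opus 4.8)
The plan is to exhibit a single polynomial-time decidable, isomorphism-closed class of structures that $\FPR^*$ cannot define; the natural candidate is a variant of the CFI-problem over the rings $\Z_{2^i}$. Concretely, let $\mcC$ consist of those CFI-structures $\CFIgraph{G}{\Z_{2^i}}{\lambda}$ (ranging over base graphs $G$, exponents $i$, and twists $\lambda\colon E\ra\Z_{2^i}$) whose sum $\sum\lambda$, viewed as an element of $\{0,1,\dots,2^i-1\}$, has top bit $0$, that is $\sum\lambda<2^{i-1}$. Since two such structures over the same $G$ and $\Z_{2^i}$ are isomorphic exactly when their twists have equal sum, $\mcC$ is closed under isomorphism; and membership is decidable in polynomial time by recovering the associated system of linear equations over $\Z_{2^i}$, solving it, and inspecting the top bit of the resulting sum. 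Hence $\mcC$ is decidable in polynomial time.

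Suppose toward a contradiction that $\FPR^*$ captures PTIME, so that some $\FPR^*$ formula $\phi$ has $\mcC$ as its class of finite models. By Lichter's refinement of an argument from \cite{GraedelPak19}, recalled above, on the CFI-structures over $\Z_{2^i}$ the formula $\phi$ is equivalent to an $\FPR$ formula $\psi$ whose rank operators are all taken over the field $\field F_2$. By the closure property of \cite{DawarHol17}, such a $\psi$ admits a fixed $k\in\N$ for which its class of models is closed under $\IMequiv{k}{\{2\}}$; the set of primes is $\{2\}$ precisely because every rank operator occurring in $\psi$ is over $\field F_2$. Consequently, on the CFI-structures over the $\Z_{2^i}$, membership in $\mcC$ is $\IMequiv{k}{\{2\}}$-invariant for this particular $k$.

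I would then invoke Theorem~\ref{LichterTheorem} for exactly this $k$, obtaining a base graph $G$, an exponent $i$, and twists $\lambda,\sigma\colon E\ra\Z_{2^i}$ with $\sum\sigma=\sum\lambda+2^{i-1}$ and $\CFIgraph{G}{\Z_{2^i}}{\lambda}\IMequiv{k}{\{2\}}\CFIgraph{G}{\Z_{2^i}}{\sigma}$. Adding $2^{i-1}$ modulo $2^i$ flips the top bit, so exactly one of $\sum\lambda$ and $\sum\sigma$ lies below $2^{i-1}$; that is, exactly one of the two structures belongs to $\mcC$. But both are CFI-structures over $\Z_{2^i}$ and they are $\IMequiv{k}{\{2\}}$-equivalent, so by the invariance of the previous paragraph they must agree on membership in $\mcC$---a contradiction. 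Hence no $\FPR^*$ formula defines $\mcC$, and $\FPR^*$ does not capture PTIME.

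The two genuinely hard ingredients are imported rather than reproved: Theorem~\ref{LichterTheorem}, whose proof rests on the invertible-map game and the blurer machinery of \cite{Lichter21a}, and Lichter's reduction of $\FPR^*$ to $\FPR$ over $\field F_2$ on these structures. Given these, the remaining assembly is light, and the only points demanding care are that the separating property be genuinely isomorphism-invariant and polynomial-time (handled by phrasing it through the sum, which is an isomorphism invariant) and that the closure result of \cite{DawarHol17} be applied with the correct prime set $Q=\{2\}$. I expect the main conceptual obstacle to lie entirely in the two cited theorems; the corollary itself is a short deduction once the top-bit separation is used to place the two $\IMequiv{k}{\{2\}}$-equivalent structures on opposite sides of a fixed PTIME property.
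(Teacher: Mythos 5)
Your proposal is correct and follows essentially the same route as the paper: reduce $\FPR^*$ to $\FPR$ with rank operators over $\field F_2$ on the CFI-structures over $\Z_{2^i}$ (Lichter's refinement of the argument from \cite{GraedelPak19}), invoke the closure of $\FPR$-definable classes under $\IMequiv{k}{\{2\}}$ from \cite{DawarHol17}, and contradict Theorem~\ref{LichterTheorem}. The paper leaves the separating polynomial-time property as an unspecified ``variant of the CFI-problem''; your top-bit formulation of it is a valid instantiation, correctly exploiting that the twist $2^{i-1}$ flips exactly that bit.
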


\section{Combining the constructions}
\label{sec:combine-constructions}

To combine the results of \cite{DawarGraPak19a} and \cite{Lichter21a} we want to show that
the CFI-structures $\CFIgraph{G}{\Z_{2^i}}{\lambda}$ are not just  $\IMequiv{k}{\{2\}}$-equivalent
but in fact $\IMequiv{k}{\primes}$-equivalent for the set of \emph{all} primes $\primes$.
For this, we have to show that the differences in the two CFI-constructions do not really 
matter.

Both CFI-structures are based on the well-known CFI-gadgets.
These gadgets originally consist of inner and outer vertices.
Every outer vertex is adjacent to some inner vertices in the gadget.
Two gadgets are connected by connecting their corresponding outer vertices.
For $d$-regular graphs, the inner vertices can be replaced by
$d$-ary relations, which is done in~\cite{DawarGraPak19a}.
Alternatively, \cite{Lichter21a} leaves out the outer vertices
and directly connects the inner vertices,
which is important to yield structures of the same signature
for different degrees of the base graph.
When using only one sort of vertices (so either only inner or only outer ones)
fewer case distinctions are needed.

For a simple and connected base graph $G=(V,E,\leq)$
and a function $\lambda \colon E \to \Z_{2^i}$
we define the two constructions
$\CFIgraphO{G}{\Z_{2^i}}{\lambda}$ using only outer vertices and
$\CFIgraphI{G}{\Z_{2^i}}{\lambda}$ using only inner vertices, respectively.

\subparagraph{Construction using outer vertices}
This construction requires that $G$ is $d$-regular.
For each vertex $u \in V$ with neighbourhood 
$\neighbors{G}{u} = \set{v_1, \dots, v_d}$ we define a gadget
consisting of vertices $\StructO_u := \Z_{2^i} \times \neighbors{G}{u}$ and two relations:
\begin{align*}
	\rel_{u} &:= \setcond{
		((a_1,v_1), \dots, (a_d,v_d)) \in \StructO_u^d}
		{\sum_{i=1}^d a_i = 0}, &u \in V,\\
	C_{u} &:= \setcond{((a,v),(a+1,v)) \in \StructO_u^2}{a\in \Z_{2^i}, v \in \neighbors{G}{u}}, &u \in V.
\end{align*}
The \emph{CFI-relation} $\rel_u$
connects all $d$-tuples of vertices for each neighbour with sum $0$ (in~$\Z_{2^i}$)
and the \emph{cycle relation}
realizes the automorphism group $\Z_{2^i}$
on the vertices of each neighbour of~$u$.
We obtain the CFI-structure $\CFIgraphO{G}{\Z_{2^q}}{\lambda} := (\StructO, \rel, C, I, \preceq)$ as follows:
The universe~$\StructO$ is given by the disjoint union of the $\StructO_u$ for all $u \in V$,
and likewise $\rel := \bigcup_{u \in V} \rel_u$
and $C:= \bigcup_{u \in V} C_{u}$.
The \emph{inverse relation} pairs
additive inverses for each edge (shifted by $\lambda$):
\begin{align*}
	I &:= \setcond{((a, v),(b,u)) \in \StructO_u \times \StructO_v}{\set{u,v} \in E, a + b = \lambda(\set{u,v})}.
\end{align*}
Finally, the preorder $\preceq$
is just the extension of $\leq$ to the gadgets:
for $(a,u') \in \StructO_u$ and $(b,v') \in \StructO_v$
we have $(a,u') \preceq (b,v')$ if
$(u,u')$ is lexicographically smaller than $(v,v')$.

\subparagraph{Construction using inner vertices}
To define $\CFIgraphI{G}{\Z_{2^i}}{\lambda}$
we replace the $d$-ary relation~$R$ with vertices
and thus can omit the restriction to a fixed degree.
For each vertex $u \in V$  we define a gadget
consisting of vertices $\StructI_u$ and two families of relations:
\begin{align*}
	\StructI_u &:= \setcond{\tup{a} \in \Z_{2^i}^{\neighbors{G}{u}}}{\sum \tup{a} = 0}, & u \in V,\\
	N_{u,v} &:= \setcond{
		(\tup{a},\tup{b}) \in \StructI_u^2} {\tup{a}(v) = \tup{b}(v)}, &u \in V, v\in \neighbors{G}{u},\\
	C_{u,v} &:= \setcond{(\tup{a},\tup{b}) \in \StructI_u^2}{\tup{a}(v)+1=\tup{b}(v)}, &u \in V, v\in \neighbors{G}{u}.
\end{align*}
The relation $N_{u,v}$ identifies a set of vertices in $\StructI_u$
corresponding to the vertex $(a,v) \in \StructO_u$.
For every $v \in \neighbors{G}{u}$ and $a \in \Z_{2^i}$
a clique is added between the vertices in the set corresponding to $(a,v)$.
These cliques are a partition of $\StructI_u$ for a fixed $v$.
The other relation $C_{u,v}$
represents the relation $C$
by adding directed complete bipartite graphs between subsequent cliques.
We need different relations for every neighbour $v \in \neighbors{G}{u}$ because the relations overlap.

We obtain the CFI-structure $\CFIgraphI{G}{\Z_{2^q}}{\lambda} := (\StructI, N, C, I, \preceq)$ as follows:
the universe~$\StructI$ is given by the disjoint union of all $\StructI_u$ for all $u \in V$.
The relations $N$ and $C$ are $4$-ary equivalence relations on pairs,
such that the $N_{u,v}$ respectively $C_{u,v}$ are given as union of
equivalence classes:
\begin{align*}
	N &:= \setcond{(\tup{a},\tup{b}, \tup{a}',\tup{b}')}{
	\setcond{(u,v)}{(\tup{a},\tup{b}) \in N_{u,v}}
	\leq 
	\setcond{(u',v')}{(\tup{a}',\tup{b}') \in N_{u',v'}}
}.
\end{align*}
Here we extended $\leq$ to sets of pairs of vertices in the base graph.
The relation $C$ is defined similarly.
The preorder $\preceq$ is again the preorder obtained as the
lexicographical extension of $\leq$ to the vertices in $B_u$.
Now connecting gadgets becomes similar to the case of $\CFIgraphO{G}{\Z_{2^q}}{\lambda}$.
Instead of adding an edge between two vertices in 
$\StructO_u \times \StructO_v$,
we add complete bipartite graphs
between the corresponding vertices in $\StructI_u \times \StructI_v$:
\begin{align*}
	 I &:= \setcond{(\tup{a},\tup{b}) \in \StructI_u\times \StructI_v}{
	 	\set{u,v} \in E, \tup{a}(v) + \tup{b}(u) =  \lambda(\set{u,v}) }.
\end{align*}

For easier presentation, the two structures 
$\CFIgraphI{G}{\Z_{2^q}}{\lambda}$
and $\CFIgraphO{G}{\Z_{2^q}}{\lambda}$
still differ slightly from the ones in~\cite{DawarGraPak19a}
and~\cite{Lichter21a}.
In~\cite{DawarGraPak19a} functions $\lambda \colon V \to \field{F}_p$
instead of $\lambda \colon E \to \Z_{2^i}$ are used.
This results in isomorphic structures.
In~\cite{Lichter21a} more relations apart from $I$ are added
to make local isomorphism types more informative.
Nevertheless, these structures have the same automorphisms
and in fact the additional relations are definable in $3$-variable logic using the relation~$I$.

The $k$-orbits of a CFI-structure $\CFIgraphI{G}{\Z_{2^i}}{\lambda}$ over a $(k+3)$-connected base graph $G=(V,E,\leq)$ 
can be defined in $(k+2)$-variable counting logic.
The proof is analogous to the one
in~\cite{GraedelPak19} for the
case of $\field{F}_p$ instead of $\Z_{2^i}$.

\subparagraph{Combining results.}
Our ultimate goal is to prove the following theorem:
\begin{theorem}\label{TheoremLimitationIMEquiv}
	For each $k$ there exists a graph $G=(V,E,\leq)$, a number $i$, and two functions
	$\lambda,\sigma\colon E\ra \Z_{2^i}$ such that $\sum\sigma=\sum \lambda + 2^{i-1}$ and 
	$\CFIgraphI{G}{\Z_{2^i}}{\lambda}\IMequiv{k}{\Primes}\CFIgraphI{G}{\Z_{2^i}}{\sigma}$.
\end{theorem}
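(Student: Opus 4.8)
The plan is to combine the two technical results quoted above by transferring the $\IMequiv{k}{\{2\}}$-equivalence of Theorem~\ref{LichterTheorem} to the full $\IMequiv{k}{\Primes}$-equivalence, exploiting the fact that the CFI-structures over $\Z_{2^i}$ have an Abelian $2$-group as their automorphism group. The key observation is that Theorem~\ref{DGPtheorem} already handles every \emph{odd} prime: for any finite set $Q$ of primes not containing $2$, the distinguishing power of $\IMequiv{k}{Q}$ on these structures collapses to counting equivalence $\cequivx{\ell}$, because the relevant representation-theoretic argument (via Maschke's Theorem) goes through whenever the group order is coprime to the characteristics in $Q$. Since the automorphism group here is a $2$-group, Maschke's Theorem applies verbatim for every characteristic $p \neq 2$, so the Dawar--Grädel--Pakusa analysis gives $\IMequiv{k}{Q}$-equivalence of $\CFIgraphI{G}{\Z_{2^i}}{\lambda}$ and $\CFIgraphI{G}{\Z_{2^i}}{\sigma}$ for any odd-prime set $Q$.

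The main work is therefore to show that the invertible-map refinements over different characteristics do not interact, so that equivalence with respect to $\{2\}$ \emph{and} with respect to each odd prime together yields equivalence with respect to $\Primes$. First I would fix $k$ and invoke Theorem~\ref{LichterTheorem} to obtain a base graph $G$, a modulus $i$, and twists $\lambda,\sigma$ with $\sum\sigma = \sum\lambda + 2^{i-1}$ such that the two structures are $\IMequiv{k}{\{2\}}$-equivalent; here I must also ensure $G$ is sufficiently highly connected (of the order demanded by Theorem~\ref{DGPtheorem}) and, if necessary, enlarge it while preserving the properties of Theorem~\ref{LichterTheorem}. The homogeneity property quoted in the excerpt is central: for these CFI-structures, $\cequivx{3k}$ (equivalently, the counting-equivalence classes of $k$-tuples) already coincides with the orbit partition under the automorphism group. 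Consequently, the $\IMequiv{k}{Q}$-partition of $k$-tuples can never be strictly finer than the orbit partition, and since both $\IMequiv{k}{\{2\}}$ and each $\IMequiv{k}{\{p\}}$ (for odd $p$) refine to exactly the orbit partition on the relevant tuples, they all agree.

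The crux is to argue this at the level of the invertible-map game rather than for a single pair of structures: I would show that a partition of $k$-tuples that is stable under invertible maps over characteristic $2$ and simultaneously stable under invertible maps over every odd characteristic must already be an invariant coarsening, and that the common stable partition is precisely the one witnessing $\CFIgraphI{G}{\Z_{2^i}}{\lambda}\IMequiv{k}{\Primes}\CFIgraphI{G}{\Z_{2^i}}{\sigma}$. Concretely, the invertible-map equivalence $\IMequiv{k}{\Primes}$ is the coarsest fixed point under the simultaneous refinement operators for all primes, so it suffices to exhibit \emph{one} partition of $k$-tuples that is a fixed point for the $\{2\}$-operator (supplied by Theorem~\ref{LichterTheorem}) and for every odd-prime operator (supplied by Theorem~\ref{DGPtheorem} via Maschke), and under which the initial tuples $\lambda$ and $\sigma$ are identified. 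The main obstacle I anticipate is the bookkeeping needed to verify that the partition produced by Lichter's blurer-based analysis and the partition produced by the representation-theoretic analysis of Dawar--Grädel--Pakusa \emph{coincide}, so that their common refinement is again a fixed point and not strictly finer; resolving this requires matching the two frameworks (game-theoretic blurers versus simultaneous similarity of matrices) on the shared orbit partition, which is exactly where the homogeneity property and the translation between the $\CFIgraphI{}{}{}$ and $\CFIgraphO{}{}{}$ presentations must be used.
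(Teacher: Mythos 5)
Your overall architecture matches the paper's: characteristic $2$ is handled by Lichter's construction, all odd primes by the Dawar--Gr\"adel--Pakusa analysis (Maschke's Theorem applies for every $p\neq 2$ since the automorphism group is a $2$-group), and the base graph must be chosen to satisfy both sets of hypotheses simultaneously (the paper uses $(d,g)$-cages, whose connectivity is controlled by a result of Balbuena and Salas, and has to replace full homogeneity by definability of $t$-orbits for $t$ up to the connectivity, since these base graphs are not $3$-regular expanders). The gap is in the combination step, which you correctly identify as the crux but do not resolve. Your plan is to exhibit one partition of $k$-tuples of the disjoint union that is a fixed point of the refinement operator for every prime separately --- namely the orbit(-correspondence) partition --- and to argue that the blurer-based and the representation-theoretic analyses both stabilise at exactly this partition. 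But neither Theorem~\ref{LichterTheorem} nor Theorem~\ref{DGPtheorem} as stated gives you this partition-level information for tuples ranging across the two \emph{non-isomorphic} structures; Lichter's result in particular is a statement about the two structures as wholes (Duplicator wins the game from the initial position), and extracting from his strategy that the stable characteristic-$2$ partition coincides with the cross-structure counting-type partition is a substantial re-proof, not bookkeeping. Moreover, the implication ``stable under each per-prime operator, hence stable under the combined operator'' is precisely the point the paper flags as non-obvious: nesting linear-algebraic operators of different characteristics could a priori increase distinguishing power.

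The paper sidesteps all of this in Lemma~\ref{LemmaCombineCharacteristics} by working on the logic side via Theorem~\ref{TheoremLKequivIM}: one argues by induction on $\LAkLogic(P\cup Q)$-formulae that every subformula defines a union of orbits; orbits are definable in $(k+2)$-variable counting logic; counting logic embeds into $\mathrm{LA}^{k+3}(P)$ for any single set $P$; hence an outermost operator of characteristic $p\in P$ applied to mixed-characteristic subformulae can be replaced by a pure $\mathrm{LA}^{k+3}(P)$ formula, and a universally quantified sentence relativised to the orbit-defining formula yields a contradiction with the black-box equivalence $\mfA\IMequiv{k+3}{P}\mfB$. This uses only the structure-level equivalences you already have, at the cost of lowering $k+3$ to $k$. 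I recommend you either adopt this route, or, if you wish to keep the fixed-point formulation, prove explicitly that the per-prime stable partitions on the disjoint union coincide with the counting-type partition --- that is where the real work in your version lies.
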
 

This theorem is proved by combining the proofs of Theorem~\ref{DGPtheorem} and
Theorem~\ref{LichterTheorem}.  Specifically, a close look at the base graphs used to prove
Theorem~\ref{LichterTheorem} in~\cite{Lichter21a} immediately gives the
following.
\begin{lemma}\label{LemmaLichterDetails}
	For each $k$ there exist $c$, $d$, $g$,~and $i$
	such that for every regular base graph $G=(V,E,\leq)$
	of degree at least $d$, vertex-connectivity at least $c$,
	and girth at least $g$ there are 
	functions $\lambda,\sigma\colon E\ra \Z_{2^i}$ such that $\sum\sigma=\sum \lambda + 2^{i-1}$ and 
	$\CFIgraphI{G}{\Z_{2^i}}{\lambda}\IMequiv{k}{\{2\}}\CFIgraphI{G}{\Z_{2^i}}{\sigma}$.
\end{lemma}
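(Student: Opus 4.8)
The plan is to prove this lemma not by a fresh argument but by re-auditing the proof of Theorem~\ref{LichterTheorem} in~\cite{Lichter21a} and isolating exactly how the base graph enters it. Lichter's argument establishes $\CFIgraph{G}{\Z_{2^i}}{\lambda}\IMequiv{k}{\{2\}}\CFIgraph{G}{\Z_{2^i}}{\sigma}$ by exhibiting a winning strategy for Duplicator in the invertible-map game $\SimGamex{k}{2}$, and that strategy is assembled from the combinatorial \emph{blurers} of~\cite{Lichter21a}. My first step is therefore to read off, from the construction of these blurers and from the verification that Duplicator's responses to Spoiler's choice of an invertible $\field F_2$-linear map remain globally consistent, which features of $G$ are actually used. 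I expect to find that $G$ enters only through (i) regularity together with a lower bound $d$ on its degree, which furnishes each gadget with enough internal symmetry to absorb a twist; (ii) a lower bound $g$ on the girth, which guarantees that within the radius relevant to a $k$-pebble game the graph looks locally like a tree, so that distinct gadgets interact independently; and (iii) a lower bound $c$ on the vertex-connectivity, which prevents Spoiler from separating the boundedly many pebbled gadgets and thereby localising the twist, and which also underlies the homogeneity property noted in Section~\ref{sec:combine-constructions}. Crucially, the thresholds $c$, $d$, $g$ and the exponent $i$ read off in this way depend on $k$ alone.

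Once this factorisation of the hypotheses is in place, the remainder is routine. Given any regular $G$ of degree at least $d$, connectivity at least $c$, and girth at least $g$, I would choose $\lambda\colon E\to\Z_{2^i}$ arbitrarily and obtain $\sigma$ by adding $2^{i-1}$ to $\lambda$ on a single edge; this forces $\sum\sigma=\sum\lambda+2^{i-1}$, and since $G$ is connected the choice of edge is immaterial up to isomorphism of the resulting structures. Lichter's Duplicator strategy then applies verbatim, as it referred to $G$ only through the three parameters just isolated, yielding $\CFIgraph{G}{\Z_{2^i}}{\lambda}\IMequiv{k}{\{2\}}\CFIgraph{G}{\Z_{2^i}}{\sigma}$. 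To pass from $\CFIgraph{\cdot}{\cdot}{\cdot}$ to the inner-vertex variant $\CFIgraphI{\cdot}{\cdot}{\cdot}$ of Section~\ref{sec:combine-constructions}, I would invoke the observations already recorded there: the two structures have the same automorphisms and differ only by relations definable in $3$-variable logic, so for $k\geq 3$ the equivalence $\IMequiv{k}{\{2\}}$ is preserved under the translation. This yields the statement for $\CFIgraphI{G}{\Z_{2^i}}{\lambda}$ as required.

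The main obstacle is precisely the audit in the first step: one must be certain that no stage of Lichter's proof covertly exploits a property of the concrete expander-like family he builds beyond the three abstracted parameters. Concretely, wherever~\cite{Lichter21a} invokes expansion, spectral, or explicit-construction facts, I must check that these are used solely to certify the girth and connectivity thresholds, after which they can be discarded in favour of the abstract bounds; and I must confirm that the value $i$ and the existence of the blurers are governed by $k$ and these bounds rather than by the specific graph. No new combinatorial mechanism is introduced here: the whole content of the lemma is to make the dependence of the earlier proof on its base graph explicit and uniform, so that in Section~\ref{sec:combine-constructions} a single graph meeting all the thresholds can be fed simultaneously into this construction and into the one underlying Theorem~\ref{DGPtheorem}.
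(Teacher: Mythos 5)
Your proposal takes essentially the same route as the paper, which itself justifies this lemma only by the remark that ``a close look at the base graphs used to prove Theorem~\ref{LichterTheorem} in~\cite{Lichter21a} immediately gives'' the statement --- i.e.\ by observing that Lichter's blurer-based Duplicator strategy uses the base graph only through lower bounds on degree, vertex-connectivity, and girth that depend on $k$ alone. Your additional care about translating from Lichter's variant of the CFI-structures to $\CFIgraphI{G}{\Z_{2^i}}{\lambda}$ matches the paper's own remarks in Section~\ref{sec:combine-constructions} (same automorphisms, extra relations definable with $3$ variables), so the argument is correct and aligned with the intended proof.
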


Our aim is to argue that the case of primes other than $2$ can be
covered by the methods used to prove  Theorem~\ref{DGPtheorem}
in~\cite{DawarGraPak19a}.  Specifically, we examine the properties of
the CFI-structures used in that proof and argue that they are
(sufficiently) satisfied by the alternative CFI-structures defined
here.
The proofs in Section~8 in~\cite{DawarGraPak19a} depend on the following
properties of CFI-structures:
\begin{itemize}
\item \emph{Homogeneity}:
	A structure is called $\ell$-homogeneous,
	if for every $t$ the $t$-orbits of the structure can be
        defined in counting logic with $\ell\cdot t$ variables.  The
        proof in~\cite{DawarGraPak19a} relies on the fact (proved
        in~\cite{GraedelGroPagPak19}) that if the base graph is a
        $3$-regular expander, then the resulting CFI-structures are
        $\ell$-homogeneous for some fixed value of $\ell$.  However,
        the construction we are using here uses base graphs that are
        $d$-regular (for increasing values of~$d$) and not necessarily
        expanders.  The proof of Lemma~\ref{LemmaLichterDetails}
        relies on a weaker connectivity assumption: that the graphs
        are $c$-connected.  With this, we cannot prove that the
        structure $\CFIgraphI{G}{\Z_{2^i}}{\lambda}$ is homogeneous.
        However, we can show that the $t$-orbits for $t \leq c$ are
        definable in counting logic with no more than $\ell \cdot t$
        variables for some constant $\ell$.

        Homogeneity is used in the proof of Theorem~8.2
        in~\cite{DawarGraPak19a} to construct a formula of counting
        logic ordering the $t$-orbits of CFI-structures.  It is clear
        from the proof of the theorem that we need this only for
        values of $t$ not exceeding $k$, the number of variables for
        which we aim to establish equivalence in
        Theorem~\ref{DGPtheorem}.  Thus, the full strength of
        homogeneity is not necessary.  We can choose, for any $k$,
        base graphs with sufficiently large values of $d$, $c$ and $g$
        as in Lemma~\ref{LemmaLichterDetails} and in these, $t$-orbits
        for all values of $t \leq c$ can be ordered in counting logic.
      \item \emph{Structure of automorphism groups:}
        It is pointed out in~\cite{DawarGraPak19a} that the
        automorphism groups of the CFI-structures
        $\CFIgraph{G}{\field F_p}{\lambda}$ constructed there are
        elementary Abelian $p$-groups.  For our structures
        $\CFIgraphI{G}{\Z_{2^i}}{\lambda}$, the automorphism groups
        are Abelian $2$-groups but not necessarily elementary.
        However, the proof of Theorem~\ref{DGPtheorem} does not use
        the assumption of elementariness anywhere.  The fact that it
        is an Abelian $p$-group is sufficient.
	\item \emph{Automorphisms as ordered objects:}
	Part~(3) of the proof of Theorem~8.2 in~\cite{DawarGraPak19a}
	exploits that automorphisms of CFI-structures can be
	represented as ordered objects.
	This works for $\CFIgraphI{G}{\Z_{2^i}}{\lambda}$
	exactly as for $\CFIgraphO{G}{\Z_{2^i}}{\lambda}$
	by exploiting the total order on the vertices
	(and hence of the edges) of $G$:
	for every edge it is stored by which amount an
	automorphism twists the edge.
\end{itemize}

Thus, we have seen that both CFI-constructions satisfy
the same crucial properties, which permits us to establish the following lemma.
\begin{lemma}\label{LemmaCFICoprimeField}
	For every $k$ there exists a number $c$ such that for every
	$c$-connected base graph $G=(V,E,\leq)$
	and every $\lambda,\sigma \colon E \to \Z_{2^i}$
	it holds that
	$\CFIgraphI{G}{\Z_{2^i}}{\lambda}\IMequiv{k}{\Primes \setminus \set{2}}\CFIgraphI{G}{\Z_{2^i}}{\sigma}$.
\end{lemma}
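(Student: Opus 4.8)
The plan is to replay the proof of Theorem~\ref{DGPtheorem} from~\cite{DawarGraPak19a} almost verbatim, now with the prime $2$ playing the role of the excluded prime $p$. The crucial observation is that the automorphism group $\Gamma$ of $\CFIgraphI{G}{\Z_{2^i}}{\lambda}$ is an abelian $2$-group, so $|\Gamma|$ is a power of $2$; since $2 \notin \Primes \setminus \set 2$, every prime $p \in \Primes \setminus \set 2$ is coprime to $|\Gamma|$. This coprimality is exactly the hypothesis under which the representation-theoretic core of Section~8 of~\cite{DawarGraPak19a} operates: by Maschke's theorem the group algebra $\field F_p[\Gamma]$ is semisimple for every such $p$, so each $\field F_p$-representation of $\Gamma$ splits into irreducibles, and the invertible maps witnessing $\IMequiv{k}{Q}$-equivalence can be assembled compatibly and uniformly across all characteristics in $Q = \Primes \setminus \set 2$ at once.

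Concretely, I would proceed as follows. First, fix $k$ and, invoking the three properties assembled above, choose the connectivity bound $c$ large enough that $c \geq k$ and that the $t$-orbits of $\CFIgraphI{G}{\Z_{2^i}}{\lambda}$ for all $t \leq c$ are definable in counting logic with at most $\ell \cdot t$ variables, where $\ell$ is the constant furnished by the weakened homogeneity property. This supplies the counting-logic formula ordering the relevant $t$-orbits that the proof of Theorem~8.2 in~\cite{DawarGraPak19a} requires, but only for $t \leq k \leq c$, which is all that proof actually consumes. Second, using that automorphisms can be encoded as ordered objects---here by recording, for each edge, the amount by which the automorphism twists it, which is linearly ordered by the order on $E$---I would reproduce Part~(3) of that proof, representing $\Gamma$ and the relevant twists as definable, ordered linear-algebraic data. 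Third, I would feed these ingredients into the semisimplicity argument: for each $p \in \Primes \setminus \set 2$ the coprimality of $p$ and $|\Gamma|$ lets Maschke's theorem decompose the representation, so that the two structures $\CFIgraphI{G}{\Z_{2^i}}{\lambda}$ and $\CFIgraphI{G}{\Z_{2^i}}{\sigma}$, which differ only by a twist lying in $\Gamma$ itself, become indistinguishable by the invertible-map game for $\IMequiv{k}{\Primes \setminus \set 2}$.

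The main obstacle I anticipate is not the algebra---which transfers cleanly once coprimality is in hand---but verifying that the \emph{weakened} homogeneity genuinely suffices. The original argument assumes full $\ell$-homogeneity obtained from expander base graphs, whereas here the base graphs are only $c$-connected and orbit-definability is guaranteed only up to $t \leq c$. I would therefore trace through the proof of Theorem~8.2 to confirm that every appeal to homogeneity occurs at a tuple length bounded by $k$, and that no step silently relies on elementariness of the $p$-group or on a property of expanders beyond $c$-connectivity. Given the discussion preceding this lemma, I expect both checks to succeed, so that the proof closes with $c$ chosen as a function of $k$ as promised.
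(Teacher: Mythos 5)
Your proposal matches the paper's argument: the paper likewise ``proves'' this lemma by replaying the proof of Theorem~\ref{DGPtheorem} from~\cite{DawarGraPak19a} with $2$ as the excluded prime, after checking exactly the three points you identify --- that the weakened, connectivity-bounded homogeneity suffices because orbit-ordering is only needed for tuple lengths up to $k$, that being an Abelian $2$-group (not necessarily elementary) is all the Maschke-based machinery requires, and that automorphisms can still be represented as ordered objects via edge twists. This is essentially the same approach, correctly executed.
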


So we  know that for every $k$ there is a pair $\mfA, \mfB$ of CFI-structures satisfying 
$\mfA \IMequiv{k}{\set{2}} \mfB$ and $\mfA \IMequiv{k}{\primes \setminus \set{2}} \mfB$.
To combine these results, we show in general that
if  $\mfA \IMequiv{k+3}{Q} \mfB$
and $\mfA \IMequiv{k+3}{P} \mfB$,
then $\mfA \IMequiv{k}{P\cup Q} \mfB$.
This is not immediate, because it is not clear
whether nesting linear-algebraic operators of characteristics in $P$ and $Q$
increases the distinguishing power.

\begin{lemma}
	\label{LemmaCombineCharacteristics}
	Let $P, Q$ be two set of primes,
	$k,i \in \nats$, $G=(V,E,\leq)$ be a $(k+3)$-connected base graph,
	$\lambda, \sigma \colon E \to \Z_{2^i}$,
	$\mfA = \CFIgraphI{G}{\Z_{2^i}}{\lambda}$, and
	$\mfB = \CFIgraphI{G}{\Z_{2^i}}{\sigma}$.
	If $\mfA \IMequiv{k+3}{P} \mfB$ and $\mfA \IMequiv{k+3}{Q} \mfB$,
	then $\mfA \IMequiv{k}{P\cup Q} \mfB$.
\end{lemma}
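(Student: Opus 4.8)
The plan is to work on the logical side, using Theorem~\ref{TheoremLKequivIM} to translate the desired conclusion into the statement that $\mfA$ and $\mfB$ satisfy the same sentences of $\LAkLogic(P\cup Q)$, and to prove this by induction on the nesting of linear-algebraic operators. The reason the claim is not immediate is exactly that such a sentence may freely alternate operators of characteristic $p\in P$ with operators of characteristic $q\in Q$, and neither hypothesis alone is closed under this nesting. To tame the alternation I would call a pair $(\tup a,\tup b)$, with $\tup a$ in $\mfA$ and $\tup b$ in $\mfB$ of equal length at most $k$, \emph{linked} if both $(\mfA,\tup a)\simequivx{k+3}{P}(\mfB,\tup b)$ and $(\mfA,\tup a)\simequivx{k+3}{Q}(\mfB,\tup b)$ hold; by hypothesis the empty pair is linked. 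The inductive statement is: for every formula $\phi(\tup x)$ of $\LAkLogic(P\cup Q)$ with at most $k$ free variables and every linked pair $(\tup a,\tup b)$, we have $\mfA\models\phi[\tup a]$ if, and only if, $\mfB\models\phi[\tup b]$. Boolean (even infinitary) combinations are immediate, so the whole argument rests on the single inductive step for an outermost operator.

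The key case is $\phi(\tup x)=\Theta\big[\tup y,\tup z.\,\psi(\tup x,\tup y,\tup z)\big]$, where $\Theta$ is a linear-algebraic quantifier of some characteristic $p$ (say $p\in P$; the case $p\in Q$ is symmetric) which, at a given value of $\tup x$, forms a matrix $M$ over $\field F_p$ indexed by the values of $\tup y$ and $\tup z$, with entries read off from $\psi$, and returns a value depending only on the $\field F_p$-similarity type of $M$. The subformula $\psi$ has arity at most $k$ but may itself contain operators of characteristic in $Q$, so the hypothesis $\simequivx{k+3}{P}$ cannot be applied to $\phi$ directly. The decisive move is to eliminate this inner $Q$-content using homogeneity. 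Since $\psi$ is isomorphism-invariant it defines, on each of $\mfA$ and $\mfB$, a union of orbits of tuples of length at most $k$; and because $G$ is $(k+3)$-connected, every such orbit is definable in $(k+2)$-variable counting logic (the orbit-definability recorded earlier in this section). By the induction hypothesis the status of $\psi$ agrees on linked tuples, so one and the same infinitary disjunction $\hat\psi$ of orbit-defining counting formulas is equivalent to $\psi$ on both structures and selects corresponding orbits. Replacing $\psi$ by $\hat\psi$ turns $\phi$ into a formula $\phi'=\Theta[\tup y,\tup z.\,\hat\psi]$ of $\mathrm{LA}^{k+2}(P)$, equivalent to $\phi$ on both $\mfA$ and $\mfB$; since $k+2\le k+3$ and the pair $(\tup a,\tup b)$ is linked, in particular $(\mfA,\tup a)\simequivx{k+3}{P}(\mfB,\tup b)$, the two-structure form of Theorem~\ref{TheoremLKequivIM} yields $\mfA\models\phi'[\tup a]$ iff $\mfB\models\phi'[\tup b]$, and hence the same for $\phi$. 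Instantiating the induction at the empty tuple gives $\mfA\IMequiv{k}{P\cup Q}\mfB$.

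The real work, and the main obstacle, lies in the homogeneity step together with the accompanying variable bookkeeping. Two things must be verified with care. First, the orbit identification furnished by counting equivalence must be genuinely the same whether it is read off from the $P$-hypothesis or the $Q$-hypothesis: this is what permits the two single-characteristic strategies to be interleaved, and it is exactly the content of homogeneity, namely that orbits of tuples of length at most $k$ coincide with their $(k+2)$-variable counting types. Second, the elimination of $\psi$ must respect the budget, that is, substituting a $(k+2)$-variable counting formula for $\psi$ and then reattaching $\Theta$ (which binds its matrix-index variables among those same $k+2$) must never exceed $k+3$ variables. It is precisely this tight accounting that forces the hypotheses to be stated with $k+3$ rather than $k$ pebbles and the base graph to be $(k+3)$-connected. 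Everything genuinely linear-algebraic has already been absorbed into the two hypotheses $\mfA\simequivx{k+3}{P}\mfB$ and $\mfA\simequivx{k+3}{Q}\mfB$; the lemma is the purely logical observation that, on structures this homogeneous, nesting operators across characteristics buys nothing beyond a constant number of additional pebbles.
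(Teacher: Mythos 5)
Your proposal follows essentially the same route as the paper's proof: induction on formula structure, using the $(k{+}2)$-variable counting-logic definability of orbits to replace the mixed-characteristic subformula by a disjunction of orbit-defining counting formulas, and then applying the single-characteristic hypothesis (via Theorem~\ref{TheoremLKequivIM}) to the resulting pure-$P$ formula within the $k{+}3$ variable budget. The only substantive difference is your induction invariant: you carry ``linked'' pairs (IM-equivalent at level $k{+}3$ in both $P$ and $Q$) where the paper carries ``same $(k{+}2)$-variable counting type,'' and the latter is the cleaner choice, since your construction of $\hat\psi$ needs $\psi$ to agree on \emph{every} pair of corresponding counting-type orbits, not only on linked ones --- a gap you would have to close by arguing that on these homogeneous CFI-structures every counting type realised in both structures is realised by a linked pair (the paper instead sidesteps this entirely by deriving its contradiction from a distinguishing \emph{sentence} $\forall\tup x.\,\chi'[\tup x]\Rightarrow\psi'[\tup x]$).
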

\begin{proof}
	We say that two tuples $\tup{a} \in A^n$ and $\tup{b} \in B^n$
	(for some $n \leq k$)
	have the same type,
	if the same $(k+2)$-variable counting logic formula
	defines the orbit of $\tup{a}$ and $\tup{b}$.
	We show by induction on formulae
	that for every $\LAkLogic(P \cup Q)$ formula $\phi$
	and every $\tup{a} \in A^k$ and $\tup{b} \in B^k$
	that have the same type it holds that $\mfA \models \phi[\tup{a}]$
	if, and only if, $\mfB \models \phi [\tup{b}]$.
	
	The only interesting case is the one of a linear-algebraic operator $f$
	of characteristic $p \in P \cup Q$.
	Assume without loss of generality that $p \in P$.
	For simplicity, we denote the generalized quantifier%
	\footnote{Formally, $\LAkLogic(P\cup Q)$ uses an interpretation instead of $\ell$ many plain formulae, but the argument remains the same. For details, we refer to~\cite{DawarGraPak19a}.}
	corresponding to $f$ by $\psi = \mathcal{Q}_f^{m,t}(\phi_1, \dots, \phi_\ell) $ for $t \in \nats$ and $2m \leq k$.
	Here $\phi_1, \dots, \phi_\ell$ are $\LAkLogic(P \cup Q)$ formulae,
	where $2m$ variables are bound by the quantifier.
	These formulae correspond to $0/1$ $A^m \times A^m$ matrices
	$M_1, \dots, M_\ell$
	and likewise to 
	$B^m \times B^m$ matrices
	$N_1, \dots, N_\ell$ (for details we refer to~\cite{DawarHol17}).
	The generalized quantifier has $n \leq k - 2m$ free variables
	and is satisfied if $f(M_1, \dots , M_\ell) \geq t$.
	
	Let $\tup{a} \in A^n$ and $\tup{b} \in B^n$ have the same type.
	By induction hypothesis $\mfA \models \phi_i[\tup{a}\tup{a}']$
	if, and only if, $\mfB \models \phi_i [\tup{b}\tup{b}']$
	for every $i \in \ell$, $\tup{a}'\in A^{2m}$, and $\tup{b}' \in B^{2m}$
	such that $\tup{a}\tup{a}'$ and $\tup{b}\tup{b}'$
	have the same type.
	That is,
	there are $(k+2)$-variable counting logic formulae $\phi'_1, \dots , \phi'_\ell$
	equivalent to the $\phi_i$ on $\mfA$ and $\mfB$
	(namely the disjunction of all orbit-defining formulae
	for all orbits satisfying $\phi_i$).
	With an additional free variable we can simulate counting
	and obtain equivalent  $\phi''_1, \dots , \phi''_\ell$ $\mathrm{LA}^{k+3}(P)$ formulae.
	
	Then $\psi' := \mathcal{Q}_f^{m,t}(\phi''_1, \dots, \phi''_\ell)$
	is an $\mathrm{LA}^{k+3}(P)$ formula
	equivalent to $\psi$ on~$\mfA$ and~$\mfB$.
	For the sake of contradiction,
	assume without loss of generality that $\mfA \models \psi'[\tup{a}]$
	but $\mfB \not\models \psi'[\tup{b}]$.
	Let $\chi[\tup{x}]$ be the $(k+2)$-variable logic formula defining the orbits of $\tup{a}$ and $\tup{b}$
	and $\chi'[\tup{x}]$ be the equivalent $\mathrm{LA}^{k+3}(P)$ formula.
	Then the $\mathrm{LA}^{k+3}(P)$ sentence
	$\forall \tup{x}.~\chi'[\tup{x}] \Rightarrow \psi'[\tup{x}]$
	distinguishes~$\mfA$ and $\mfB$.
	But by assumption and Theorem~\ref{TheoremLKequivIM} such a sentence does not exist.
\end{proof}
We believe that with a more careful analysis
the decrease of the number of variables from $k+3$
for $P$ and $Q$
to $k$ for $P \cup Q$ in Lemma~\ref{LemmaCombineCharacteristics} is not needed.
Finally, we are ready to prove Theorem~\ref{TheoremLimitationIMEquiv}.

\begin{proof}[Proof of Theorem~\ref{TheoremLimitationIMEquiv}]
	For every $d\geq 2$ and $g\geq 3$
	there exists a $d$-regular graph of girth~$g$~\cite{Sachs63},
	in particular there exists a $(d,g)$-cage
	(a graph with minimal order for the parameters~$d$ and~$g$).
	Every $(d,g)$-cage for an odd $d \geq 7$ is
	$\lceil \frac{d}{2} \rceil$-connected~\cite{BalbuenaS12}.
	
	Fix $k$
	and let $c_1$, $d$, $g$, and $i$ be the constants
	given by Lemma~\ref{LemmaLichterDetails} for $k+3$.
	Furthermore, let~$c_2$ be the constant given by 
	Lemma~\ref{LemmaCFICoprimeField} for $k+3$.
	We set $c := \max\set{c_1, c_2, k+3}$
	and assume that $d$ is odd (otherwise we increase $d$ by one).
	Let $G=(V,E,\leq)$ be a $(\max\set{d, 2c+1},g)$-cage
	(for an arbitrary order $\leq$).
	By Lemma~\ref{LemmaLichterDetails}
	there are functions $\lambda,\sigma\colon E\ra \Z_{2^i}$ such that $\sum\sigma=\sum \lambda + 2^{i-1}$ and 
	$\CFIgraphI{G}{\Z_{2^i}}{\lambda}\IMequiv{k+3}{\set{2}}\CFIgraphI{G}{\Z_{2^i}}{\sigma}$.
	By Lemma~\ref{LemmaCFICoprimeField}
	it holds that
	$\CFIgraphI{G}{\Z_{2^i}}{\lambda}\IMequiv{k+3}{\Primes \setminus \set{2}}\CFIgraphI{G}{\Z_{2^i}}{\sigma}$.
	The claim follows with Lemma~\ref{LemmaCombineCharacteristics}.
\end{proof}

\section{Conclusion}
There are two important conclusions that can be drawn from
Theorem~\ref{TheoremLimitationIMEquiv}.  The first is the immediate
one that there is no constant $k$ for which the $k$-invertible-map test
yields a complete isomorphism test.
\begin{corollary}
  There is no fixed $k$ such that $\equiv^{\textrm{IM}}_{k, \primes}$
  coincides with isomorphism on finite structures.
\end{corollary}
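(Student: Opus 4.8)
The plan is to derive the corollary directly from Theorem~\ref{TheoremLimitationIMEquiv}, which has already done all the combinatorial and algebraic work. First I would fix an arbitrary $k$ and apply Theorem~\ref{TheoremLimitationIMEquiv} to obtain a base graph $G=(V,E,\leq)$, a number $i$, and functions $\lambda,\sigma\colon E\to\Z_{2^i}$ with $\sum\sigma=\sum\lambda+2^{i-1}$ and $\CFIgraphI{G}{\Z_{2^i}}{\lambda}\IMequiv{k}{\primes}\CFIgraphI{G}{\Z_{2^i}}{\sigma}$. Writing $\mfA=\CFIgraphI{G}{\Z_{2^i}}{\lambda}$ and $\mfB=\CFIgraphI{G}{\Z_{2^i}}{\sigma}$, this already supplies one half of what is needed: a pair of structures that are $\IMequiv{k}{\primes}$-equivalent.

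The second step is to argue that this pair is \emph{non}-isomorphic. By the isomorphism criterion for these CFI-structures recalled above, $\mfA\iso\mfB$ holds if, and only if, $\sum\lambda=\sum\sigma$ in $\Z_{2^i}$. Here $\sum\sigma-\sum\lambda=2^{i-1}$, and $2^{i-1}$ is the unique element of order two in $\Z_{2^i}$, hence nonzero; therefore $\sum\lambda\neq\sum\sigma$ and $\mfA\not\iso\mfB$.

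Combining the two steps, for the chosen $k$ we have structures $\mfA,\mfB$ with $\mfA\IMequiv{k}{\primes}\mfB$ yet $\mfA\not\iso\mfB$, so $\IMequiv{k}{\primes}$ strictly over-approximates isomorphism. Since $k$ was arbitrary, no fixed $k$ can make $\IMequiv{k}{\primes}$ coincide with isomorphism on all finite structures, which is precisely the claim.

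I expect essentially no obstacle here beyond correctly invoking the isomorphism criterion; the whole difficulty has been absorbed into Theorem~\ref{TheoremLimitationIMEquiv}. The only point that deserves explicit mention is the observation that $2^{i-1}\neq 0$ in $\Z_{2^i}$, as it is exactly this fact that guarantees that the two structures produced by the theorem genuinely differ, and hence that $\IMequiv{k}{\primes}$ fails to separate a non-isomorphic pair.
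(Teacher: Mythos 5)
Your proof is correct and matches the paper's (implicit) argument: the paper treats this corollary as an immediate consequence of Theorem~\ref{TheoremLimitationIMEquiv} together with the isomorphism criterion for CFI-structures over $\Z_{2^i}$ (isomorphic iff $\sum\lambda=\sum\sigma$), which is exactly the two-step argument you spell out. Your explicit observation that $2^{i-1}\neq 0$ in $\Z_{2^i}$ is the right point to make precise, and nothing further is needed.
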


The CFI-structures constructed in the proof of Theorem~\ref{LichterTheorem}
are large: their size is super-exponential in $k$.
In particular, we get only a weak lower bound on $k$ in terms of the
size of the CFI-structures needed to distinguish them with the
invertible-map equivalence $\equiv^{\textrm{IM}}_{k, \primes}$.
The bound is super-constant but sub-logarithmic.
This should be contrasted
with the linear lower bound for the dimension of the Weisfeiler-Leman
method needed to distinguish the CFI-structures.  It is an interesting
question whether the bound for the invertible-map equivalence can be strengthened.

The second consequence is that no linear-algebraic logic captures
PTIME.  Indeed, the problem of determining, for a structure
$\CFIgraphI{G}{\Z_{2^i}}{\lambda}$, whether $\sum \lambda = 0$ is
decidable in polynomial time~\cite{Lichter21a}.
\begin{corollary}
No extension of fixed-point logic by linear-algebraic operators over fields captures PTIME.  
\end{corollary}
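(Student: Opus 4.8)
The plan is to obtain the corollary as a direct consequence of Theorem~\ref{TheoremLimitationIMEquiv}, mirroring the way Corollary~\ref{cor:undefinable} is derived from Theorem~\ref{DGPtheorem}. First I would fix the concrete polynomial-time decidable class that witnesses the separation: the class $\mcC$ of all structures isomorphic to some $\CFIgraphI{G}{\Z_{2^i}}{\lambda}$ with $\sum\lambda=0$. Because the isomorphism type of $\CFIgraphI{G}{\Z_{2^i}}{\lambda}$ is determined by $\sum\lambda\in\Z_{2^i}$, the class $\mcC$ is isomorphism-closed and well defined, and membership is decided in polynomial time by setting up and solving the associated system of linear equations over $\Z_{2^i}$~\cite{Lichter21a}.

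Next I would exploit that every extension $L$ of fixed-point logic by linear-algebraic operators over fields is subsumed by $\LALogic(\Primes)$, and that each individual $L$-sentence uses only finitely many variables, so that it is equivalent to a sentence of $\LAkLogic(\Primes)$ for some fixed $k$. Assume, towards a contradiction, that such an $L$ captures PTIME; then $\mcC$ is defined by some $L$-sentence $\phi$, which I regard as a $\LAkLogic(\Primes)$-sentence for a suitable $k$. By (the two-structure form of) Theorem~\ref{TheoremLKequivIM}, any two structures that are $\IMequiv{k}{\Primes}$-equivalent agree on $\phi$; that is, the class of models of $\phi$ is closed under $\IMequiv{k}{\Primes}$.

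Finally I would invoke Theorem~\ref{TheoremLimitationIMEquiv} for exactly this $k$ to produce a base graph $G$, a number $i$, and twists $\lambda,\sigma$ with $\sum\sigma=\sum\lambda+2^{i-1}$ and $\CFIgraphI{G}{\Z_{2^i}}{\lambda}\IMequiv{k}{\Primes}\CFIgraphI{G}{\Z_{2^i}}{\sigma}$. Since $2^{i-1}\neq 0$ in $\Z_{2^i}$, the two total twists differ; arranging the construction so that $\lambda$ is untwisted ($\sum\lambda=0$) while $\sum\sigma=2^{i-1}$---or, failing that, replacing the property ``$\sum=0$'' by the isomorphism-invariant and still polynomial-time computable property that the most significant bit of $\sum$ is $0$, which distinguishes any value from that value plus $2^{i-1}$---I obtain that one of the two structures lies in $\mcC$ and the other does not. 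Then $\phi$ separates a pair of $\IMequiv{k}{\Primes}$-equivalent structures, contradicting the closure established above. I expect the only genuinely delicate point to be the order of quantification: the number $k$, and with it the counterexample, must be chosen \emph{after} the sentence $\phi$ is fixed, which is precisely what the ``for each $k$'' shape of Theorem~\ref{TheoremLimitationIMEquiv} is designed to allow. The subsumption of arbitrary field operators into $\LALogic(\Primes)$ and the polynomial-time decidability of the CFI-problem are taken from the cited work, and everything else is routine.
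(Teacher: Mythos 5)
Your proposal is correct and follows essentially the same route the paper intends: the CFI-problem over $\Z_{2^i}$ as the polynomial-time witness, subsumption of any such logic into $\LAkLogic(\Primes)$ for some $k$ depending on the sentence, closure of definable classes under $\IMequiv{k}{\Primes}$ via Theorem~\ref{TheoremLKequivIM}, and then Theorem~\ref{TheoremLimitationIMEquiv} applied for that $k$ to produce the contradicting non-isomorphic pair. Your attention to the quantifier order and to making the separating property isomorphism-invariant is exactly the right care, and nothing in your argument deviates from the paper's (largely implicit) derivation.
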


\bibliographystyle{plain}
\bibliography{DGLbib}

\begin{thebibliography}{10}

\bibitem{AtseriasBulDaw09}
A.~Atserias, A.~Bulatov, and A.~Dawar.
\newblock Affine systems of equations and counting infinitary logic.
\newblock {\em Theoretical Computer Science}, 410:1666--1683, 2009.

\bibitem{BalbuenaS12}
C.~Balbuena and J.~Salas.
\newblock A new bound for the connectivity of cages.
\newblock {\em Appl. Math. Lett.}, 25(11):1676--1680, 2012.

\bibitem{CFI92}
J.~Cai, M.~F{\"u}rer, and N.~Immerman.
\newblock An optimal lower bound on the number of variables for graph
  identification.
\newblock {\em Combinatorica}, 12(4):389--410, 1992.

\bibitem{DawarGraHolKopPak13}
A.~Dawar, E.~Gr{\"a}del, B.~Holm, E.~Kopczynski, and W.~Pakusa.
\newblock {Definability of linear equation systems over groups and rings}.
\newblock {\em Logical Methods in Computer Science, Special Issue dedicated to
  CSL 2012}, 2013.

\bibitem{DawarGraPak19a}
A.~Dawar, E.~Gr{\"{a}}del, and W.~Pakusa.
\newblock Approximations of isomorphism and logics with linear-algebraic
  operators.
\newblock {\em CoRR}, abs/1902.06648, 2019.

\bibitem{DawarGraPak19}
A.~Dawar, E.~Gr{\"{a}}del, and W.~Pakusa.
\newblock Approximations of isomorphism and logics with linear-algebraic
  operators.
\newblock In {\em 46th International Colloquium on Automata, Languages, and
  Programming, {ICALP} 2019.}, pages 112:1--112:14, 2019.

\bibitem{DawarGroHolLau09}
A.~Dawar, M.~Grohe, B.~Holm, and B.~Laubner.
\newblock Logics with rank operators.
\newblock In {\em Proceedings of {LICS} 2009}, pages 113--122, 2009.

\bibitem{DawarHol17}
A.~Dawar and B.~Holm.
\newblock Pebble games with algebraic rules.
\newblock {\em Fundam. Inform.}, 150(3-4):281--316, 2017.

\bibitem{GraedelGroPagPak19}
E.~Gr{\"{a}}del, M.~Grohe, B.~Pago, and W.~Pakusa.
\newblock A finite-model-theoretic view on propositional proof complexity.
\newblock {\em Logical Methods in Computer Science}, 15:1:4:1--4:53, 2019.

\bibitem{GraedelPak19}
E.~Gr{\"{a}}del and W.~Pakusa.
\newblock {Rank logic is dead, long live rank logic!}
\newblock {\em Journal of Symbolic Logic}, 2019.

\bibitem{Grohe08}
M.~Grohe.
\newblock The quest for a logic capturing {PTIME}.
\newblock In {\em Proceedings of the 23rd IEEE Symposium on Logic in Computer
  Science, LICS 2008}, pages 267--271, 2008.

\bibitem{Holm10}
B.~Holm.
\newblock {\em Descriptive {C}omplexity of {L}inear {A}lgebra}.
\newblock PhD thesis, University of Cambridge, 2010.

\bibitem{Lichter21}
M.~Lichter.
\newblock Separating rank logic from polynomial time.
\newblock In {\em 36th Annual {ACM/IEEE} Symposium on Logic in Computer
  Science, {LICS} 2021, Rome, Italy, June 29 - July 2, 2021}, pages 1--13.
  {IEEE}, 2021.

\bibitem{Lichter21a}
M.~Lichter.
\newblock Separating rank logic from polynomial time.
\newblock {\em CoRR}, abs/2104.12999, 2021.

\bibitem{Sachs63}
H.~Sachs.
\newblock Regular graphs with given girth and restricted circuits.
\newblock {\em J. London Math. Soc.}, s1-38(1):423--429, 1963.

\end{thebibliography}

\end{document}